\documentclass[a4paper,11pt]{article}
\usepackage[margin=27mm]{geometry}
\usepackage[T1]{fontenc}
\usepackage{lmodern}
\usepackage{amsmath,amssymb,amsthm,mathtools,microtype}
\usepackage[numbers,sort&compress]{natbib}
\usepackage{enumitem}
\setlist[enumerate]{itemsep=4pt,topsep=6pt}
\usepackage{fancyhdr}
\usepackage[colorlinks=true,linkcolor=blue,citecolor=blue,urlcolor=blue]{hyperref}
\newtheorem{theorem}{Theorem}[section]
\newtheorem{lemma}[theorem]{Lemma}
\newtheorem{proposition}[theorem]{Proposition}

\newtheorem{example}[theorem]{Example}
\theoremstyle{definition}\newtheorem{definition}[theorem]{Definition}
\theoremstyle{remark}

\newcommand{\ii}{\mathrm i}
\newcommand{\EE}{\mathbb E}
\newcommand{\PP}{\mathbb P}
\newcommand{\CC}{\mathbb C}
\newcommand{\RR}{\mathbb R}
\newcommand{\ZZ}{\mathbb Z}
\newcommand{\NN}{\mathbb N}

\newcommand{\Res}{\operatorname{Res}}
\newcommand{\tr}{\operatorname{tr}}
\newcommand{\ran}{\operatorname{ran}}
\title{Integer Sampling and the Identifiability of Mellin Residues}
\author{Yifan He\\[0.3em]
\small School of Physical Science and Technology, Lanzhou University\\
\small Lanzhou 730000, China\\
\small\texttt{heyf21@lzu.edu.cn}}

\date{}

\begin{document}
\maketitle
\begin{abstract}
I prove that all Mellin residues are determined exactly by integer samples
known up to errors smaller than every algebraic order, for an explicit
class of functions consisting of a slow component with a complete
inverse-power expansion and countably many carriers
$\exp(2\pi i\sigma 2^k a^x)$ with admissible envelopes.
A single exceptional null set of parameters $a>1$
works for the entire function class. The key estimate is an eventual uniform
lower bound for a polynomially weighted sampling Gram matrix on windows of
length $N^{1/2}$, with $O(N^{1/4})$ active modes. I derive this bound from
parameter-averaged oscillatory integrals and a finite trace-moment comparison
with product Haar measure, retaining the exact resonances among dyadic
frequencies. Summed Taylor estimates then identify every coefficient of the
slow asymptotic expansion, and a Mellin continuation argument identifies
these coefficients with the actual residues. I also give an alternative
derivation from existing higher-correlation estimates, an infinite-mode
example, and exact counterexamples at integer parameters. The result
supplies an interpolation-invariance statement for a model class motivated
by Mellin averaging in quantum gravity.
\end{abstract}
\medskip
\noindent\textbf{Keywords:} Mellin transform; integer sampling; oscillatory integrals; Gram matrices; asymptotic identifiability; Mellin averaging; quantum gravity.

\section{Introduction}\label{sec:intro}

\subsection{Wormholes, ensembles, and asymptotic averaging}
Gravitational path integrals with several asymptotic boundaries can receive connected contributions from wormhole geometries. Such contributions motivate a statistical interpretation of quantities that, in a fixed microscopic theory, are definite observables. The issue is especially sharp when the boundary description is a specified quantum system rather than an explicitly chosen ensemble. It has led to several related, but mathematically distinct, ways of connecting gravitational expansions with statistical averages.

A concrete setting is Jackiw--Teitelboim gravity. Saad, Shenker and Stanford~\cite{SSSMatrix} identify its partition functions on surfaces of arbitrary genus and with multiple boundaries with the genus expansion of a matrix integral. This realizes an ensemble description at the level of that expansion; its nonperturbative completion is not unique. Marolf and Maxfield~\cite{MarolfMaxfield} examine the relation between spacetime wormholes, baby-universe states and ensembles, emphasizing the role of null states in the gravitational inner product. These constructions clarify why ensemble-like structures can arise, while also exposing the distinction between an ensemble description and a particular member of an ensemble.

The spectral form factor gives another useful perspective. For a
Hamiltonian $H_N$ for which $\exp(-\beta H_N)$ is trace class, one considers
\[
Z_N(\beta+it)=\operatorname{Tr}\exp[-(\beta+it)H_N],
\qquad
K_N(\beta,t)=|Z_N(\beta+it)|^2.
\]
The semiclassical ramp construction of Saad, Shenker and Stanford~\cite{SSSRamp} relates an averaged late-time feature to a two-sided gravitational saddle. The smooth behavior captured by such a calculation must be distinguished from the fluctuations of the microscopic observable. Both the choice of averaging variable and the accuracy required of the averaging prescription matter when subleading terms of an asymptotic expansion are to be retained.

Kudler-Flam and Witten~\cite{KFW}, in \emph{Wormholes and Averaging over $N$}, propose a different way to extract asymptotic data when an observable admits a suitable continuation from integer $N$ to a continuous variable. Their Mellin average is a formal inverse-power expansion defined by residues of a Mellin transform. It is not a moving-window average and need not define a convergent series or a new smooth function. Their analysis combines the double-cone motivation with models of very rapid oscillation and analytic continuation in $N$. In the convention relevant here,
\begin{equation}\label{eq:motivation}
 \mathcal M_F(s)=\int_1^\infty x^{s-1}F(x)\,dx,
 \qquad \langle F\rangle_{\!M}(x)=\sum_{r\ge0}c_rx^{-r},
 \qquad c_r=-\Res_{s=r}\mathcal M_F(s).
\end{equation}
The association between asymptotic powers and Mellin poles is classical; see the direct mapping theorem of Flajolet, Gourdon and Dumas~\cite[Theorem 3]{FGD}. The additional question raised by an integer physical parameter is whether the residue data depend on the chosen admissible interpolation.

\subsection{The mathematical question and the contribution}
In this paper I address that interpolation question for a specified oscillatory class. I write
\[
 F(x)=A(x)+\sum_{\sigma=\pm1}\sum_{k\ge1}
 B_{\sigma,k}(x)\exp(2\pi i\sigma2^ka^x),\qquad a>1,
\]
where $A$ has a complete inverse-power expansion, the envelopes have summed derivative bounds, and the tail beyond $k\asymp x^{1/4}$ is smaller than every algebraic order. The complete assumptions are given in Definition~\ref{def:class}. They impose neither a sampling inequality nor a subspace-angle condition.

My main result, Theorem~\ref{thm:main}, establishes that agreement of integer samples to every algebraic order forces equality of all Mellin residues. The exceptional set of $a$ is fixed before the functions are chosen. The substantive sampling input is Theorem~\ref{thm:gram}: on every sufficiently late window of length $\lfloor\sqrt N\rfloor$, a growing dictionary of carriers and fixed-degree polynomial weights has its normalized Gram matrix between $I/2$ and $3I/2$. The passage from this bound to the full envelope class uses a summed Taylor estimate; it does not replace the infinitely many carriers by a fixed finite dictionary.

The contribution is the combination of this short-window operator estimate
with identifiability of the complete asymptotic residue sequence outside a
single exceptional null set shared by the entire function class.
The conclusions concern the exact carrier dictionary and envelope
conditions stated below. The resulting theorem is a statement about an
explicitly defined oscillatory class $\mathcal{C}_a$, not about a conformal
field theory. Its relation to holography is that $\mathcal{C}_a$ serves as
a model class for potential holographic observables: it captures the
competition between a slow $1/N$ component and rapidly oscillating
carriers. The theorem provides sufficient conditions---namely conditions
(A)--(D) for a fixed $a\in(1,\infty)\setminus E$, with the sampling Gram
bound established below---for integer samples to determine the Mellin
residues unambiguously. Whether a given holographic observable actually
belongs to this class is a separate question, but once this membership
and the parameter condition are verified, the present theorem supplies
a mathematical foundation for interpolation invariance in Mellin
averaging. In this sense, the result is a template: it does not assert
that every conformal field theory observable is admissible, but it gives
a precise sufficient criterion under which the formal Mellin residue
expansion is well-defined and independent of the chosen admissible
interpolation of the integer samples within $\mathcal{C}_a$. 

\subsection{Relation to sampling and metric distribution theory}
There are two relevant bodies of prior mathematics. In metric distribution theory, Aistleitner and Baker~\cite[Theorem 1.2]{AB} establish Poissonian pair correlations for powers of almost every real parameter. Aistleitner, Baker, Technau and Yesha~\cite[Theorem 1.1]{ABTY} prove the corresponding assertion for all fixed orders. Their Corollary~4.3 supplies a uniform oscillatory-integral estimate that
also yields the sampling bound, after an explicit finite-moment reduction. Appendix~\ref{sec:alternative} gives that reduction. This identifies the precise existing phase-estimation input and the further matrix and approximation arguments that produce the class-uniform identifiability result. Algom, Chang, Wu and Wu~\cite{ACWW} extend metric correlation results to self-similar measures. Their polynomial-phase estimate has a stated coefficient range that differs from the exponentially large dyadic frequencies used here.

In sampling theory, Adcock, Hansen and Poon~\cite[Theorems 4.2 and 4.5]{AHP} express stability through angles between sampling and reconstruction spaces. For the present dictionary, the required angle is a conclusion of an explicit parameter estimate. Mellin sampling results in Mellin--Bernstein spaces, such as the Valiron-type formula of Bardaro, Butzer, Mantellini and Schmeisser~\cite[Theorem 7]{Bardaro}, use an exponential sampling grid and global growth restrictions. Here the grid is the positive integers, the target is the asymptotic residue sequence, and the complex neighborhood of the positive ray may narrow without a uniform strip width.

These distinctions locate the result at the intersection of metric phase estimates, finite-dimensional sampling and asymptotic Mellin analysis. The matrix concentration theorem used in the proof is due to Tropp~\cite[Theorem 6.1(ii)]{Tropp}. The original samples are deterministic functions of one parameter; independence occurs only in an auxiliary comparison model.

\subsection{Organization and notation}
Section~\ref{sec:class} states the class and the main theorem. Sections~\ref{sec:phase}--\ref{sec:separation} prove the sampling estimate. Sections~\ref{sec:mellin} and \ref{sec:identification} establish the analytic continuation and identify the residues. Section~\ref{sec:examples} supplies nonempty and infinite-mode examples and the integer-parameter obstruction. Sections~\ref{sec:math_scope} and~\ref{sec:phys} discuss the mathematical scope and the physical
implications, respectively. Appendix~\ref{sec:alternative} proves the alternative metric reduction.

I write $e(t)=\exp(2\pi i t)$, $\NN=\{1,2,\ldots\}$ and $\NN_0=\{0,1,\ldots\}$. Matrix norms are operator norms induced by the complex Euclidean norm, and $X\le Y$ denotes the Hermitian quadratic-form order. Constants in $O_d(\cdot)$ may depend on the indicated fixed parameters.
In estimates involving a fixed admissible representation, dependence on
that representation, including $p$ and the constants in (A)--(C), may
also be suppressed. Every asymptotic statement is taken as the real or integer argument tends to $+\infty$, as specified.

\section{The function class and the identifiability theorem}\label{sec:class}
Fix $a>1$. The carrier indices are
$\sigma\in\{-1,1\}$ and integers $k\ge1$, and
\[
 \psi_{\sigma,k}(x)=e(\sigma 2^k a^x),\qquad x\ge1.
\]
All real powers of $x>0$ use the real logarithm. Real derivatives of complex-valued functions are meant below.
\begin{definition}\label{def:class}
The class $\mathcal C_a$ consists of functions admitting a representation
\begin{equation}\label{eq:repr}
 F(x)=A(x)+\sum_{\sigma,k}B_{\sigma,k}(x)\psi_{\sigma,k}(x)
\end{equation}
with the following four conditions.
\begin{enumerate}
\item[(A)] There are $c_r\in\CC$, $r\ge0$, such that for every integer $J\ge1$ and $j\ge0$,
\[
 \frac{d^j}{dx^j}\left(A(x)-\sum_{r=0}^{J-1}c_rx^{-r}\right)
       =O(x^{-J-j})\quad(x\longrightarrow\infty).
\]
\item[(B)] There is a single $p\ge0$ such that for every integer $j\ge0$ a finite $C_j$ satisfies
\[
 \sum_{\sigma,k}|B_{\sigma,k}^{(j)}(x)|\le C_jx^{p-3j/4}
       \quad\hbox{for all }x\ge1.
\]
\item[(C)] For every real $R>0$ a finite $C_R$ satisfies
\[
 \sum_{\sigma,k:\,k>\lceil x^{1/4}\rceil}|B_{\sigma,k}(x)|
       \le C_Rx^{-R}\quad\hbox{for all }x\ge1.
\]
\item[(D)] The functions $A$ and all $B_{\sigma,k}$ extend holomorphically to a common open complex neighborhood of $[1,\infty)$. The series in \eqref{eq:repr}, with $a^z=\exp(z\log a)$, converges normally on compact subsets of this neighborhood. The neighborhood may depend on the representation.
\end{enumerate}
Normal convergence in (D) means that the sum of the suprema of the absolute carrier terms is finite on every compact subset of the neighborhood.
Constants and asymptotic thresholds in (A) can depend on $J,j$ and the representation. The exponent $p$ in (B) cannot depend on $j$. No fixed-width strip or spectral hypothesis is imposed.
\end{definition}

The Mellin transform in the next statement is the continuation of the actual integral, whose existence and residue formula are established in Proposition~\ref{prop:mellin}.
\begin{theorem}[Integer-sample identifiability]\label{thm:main}
There exists a Lebesgue-measurable null set $E\subset(1,\infty)$ such that for every $a\in(1,\infty)\setminus E$ and every $F_1,F_2\in\mathcal C_a$, the conditions
\[
 \forall R>0,\qquad F_1(n)-F_2(n)=O(n^{-R})\quad(n\longrightarrow\infty,\ n\in\NN)
\]
imply
\[
 \Res_{s=r}\mathcal M_{F_1}(s)=\Res_{s=r}\mathcal M_{F_2}(s)
             \quad\hbox{for every integer }r\ge0,
\]
where $\mathcal M_F(s)$ is the meromorphic continuation of
$\int_1^\infty x^{s-1}F(x)\,dx$ from a left half-plane.
\end{theorem}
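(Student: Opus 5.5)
The plan is to reduce to a single function and use linearity. I set $F=F_1-F_2$. Condition (A) is linear, condition (B) holds with $p=\max(p_1,p_2)$ after adding the envelopes term by term, and (C)--(D) hold on the intersection of the two neighborhoods. Hence $F\in\mathcal C_a$ with slow coefficients $c_r=c_r^{(1)}-c_r^{(2)}$, and $F(n)=O(n^{-R})$ for every $R$. By Proposition~\ref{prop:mellin}, $\mathcal M_F$ continues meromorphically and $\Res_{s=r}\mathcal M_F=-c_r$, so only the carrier part must be shown to contribute no residues. It therefore suffices to prove that all $c_r$ vanish. I argue by contradiction: let $r_0$ be the least index with $c_{r_0}\neq0$. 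Then $A(x)=c_{r_0}x^{-r_0}+O(x^{-r_0-1})$, and $A$ varies by only $O(x^{-r_0-1/2})$ across a window of length $\sqrt N$.

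The sampling step takes the window $W_N=\{N,\dots,N+\lfloor\sqrt N\rfloor\}$ and $K=\lceil (2N)^{1/4}\rceil$. By (C), the carriers with $k>K$ contribute $O(N^{-R})$ on $W_N$. On $W_N$ I expand each envelope $B_{\sigma,k}(N+t)$ by Taylor's theorem to a fixed degree $D$ in the normalized variable $t/\sqrt N$. By (B), the $j$-th summed coefficient is $O(N^{p-3j/4+j/2})=O(N^{p-j/4})$, and the summed Taylor remainder is $O(N^{p-(D+1)/4})$. Choosing $D$ with $p-(D+1)/4<-r_0-1$ makes this remainder negligible. The slow part is likewise a polynomial of degree $D$ in $t/\sqrt N$, up to a negligible error. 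This writes $F(N+t)$ as a combination $\sum_{m}\beta_m v_m(t)+\text{err}$, where the vectors $v_m$ are the constant carrier together with $\psi_{\sigma,k}(N+t)\,(t/\sqrt N)^j$ for $k\le K$ and $j\le D$. This is exactly the dictionary of Theorem~\ref{thm:gram}. For $a$ outside its null set $E$ and $N$ large, the normalized Gram matrix satisfies $G\ge I/2$. Therefore
\[
\tfrac12\sum_m|\beta_m|^2\le \frac{1}{\lfloor\sqrt N\rfloor}\sum_{t}\Bigl|\sum_m\beta_m v_m(t)\Bigr|^2\le 2\max_{W_N}|F|^2+O(N^{-2r_0-2}) =O(N^{-2r_0-2}).
\]
In particular the coefficient of the constant, nonoscillating vector is $O(N^{-r_0-1})$. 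That coefficient equals $A(N)$ up to the constant Taylor terms, so $A(N)=c_{r_0}N^{-r_0}+O(N^{-r_0-1})$ gives a contradiction. Because $E$ comes from Theorem~\ref{thm:gram}, which involves only the dictionary and not $F$, the exceptional set is independent of $F_1$ and $F_2$.

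The main obstacle is ensuring that the dictionary in Theorem~\ref{thm:gram} really contains the constant mode alongside the carriers. If it does not, the constant must be included as an extra mode (the $k=0$, ``$\sigma=0$'' frequency), and its near-orthogonality to $\psi_{\sigma,k}$ has to be checked on the window. A second point is that Theorem~\ref{thm:gram} must allow $O(N^{1/4})$ modes uniformly, which matches the threshold in (C). Finally, the residue identification by Proposition~\ref{prop:mellin}, which says that the carrier terms give an entire contribution via contour rotation using (D), I take as given.
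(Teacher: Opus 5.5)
Your proposal is correct and follows essentially the paper's proof: you pass to $F=F_1-F_2$, Taylor-expand the slow part and the retained envelopes on windows of length $\sqrt N$ with $K_N\asymp N^{1/4}$ carriers, use the Gram lower bound of Theorem~\ref{thm:gram} to force $A(N)$ small, and convert to residues via Proposition~\ref{prop:mellin}; your minimal-$r_0$ contradiction is just a repackaging of the paper's all-order decay of $A(N)$ followed by induction on the coefficients. One small adjustment is needed: Theorem~\ref{thm:gram} is stated for the orthonormal weights $q_{r,M}$ rather than the monomials $(t/\sqrt N)^j$, so in your basis the bound reads $G\ge c_D I$ with some $c_D>0$ supplied by Lemma~\ref{lem:poly} (alternatively, work with $q_{r,M}$ and recover $A(N)$ by evaluation at $t=0$, as the paper does); also, Proposition~\ref{prop:mellin} obtains the entire carrier contribution by integration by parts rather than contour rotation, and neither point affects your argument.
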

The exceptional set is chosen before the functions. The almost-everywhere assertion alone does not identify a particular noninteger parameter as admissible. Integer parameters are genuine exceptions (Example~\ref{ex:integer}).

\section{A uniform Fourier comparison on short windows}\label{sec:phase}
The first estimate is uniform even when the integer coefficients are large. It avoids an independence assertion about consecutive powers of $a$.
\begin{lemma}[First derivative estimate]\label{lem:vdC}
If $f$ is a real $C^2$ function on a bounded interval, $f'$ is monotone and $|f'|\ge\lambda>0$, then
\[
 \left|\int e(f(t))\,dt\right|\le\frac{2}{\pi\lambda}.
\]
The integral is over that interval.
\end{lemma}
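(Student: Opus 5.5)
This is the classical first-derivative (Kuzmin--Landau / van der Corput) estimate, and I would prove it by integrating by parts once against the derivative of the phase. Write the interval as $[\alpha,\beta]$. Since $f'$ is continuous, $|f'|\ge\lambda>0$ forces $f'$ to have constant sign. Replacing $f$ by $-f$ conjugates the integral, so its modulus is unchanged, and we may assume $f'\ge\lambda$.

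Set $g=1/(2\pi\ii f')$, so that $e(f(t))=g(t)\,\frac{d}{dt}e(f(t))$. Integration by parts gives
\[
 \int_\alpha^\beta e(f(t))\,dt=\Bigl[\frac{e(f(t))}{2\pi\ii f'(t)}\Bigr]_\alpha^\beta-\int_\alpha^\beta e(f(t))\,d\Bigl(\frac{1}{2\pi\ii f'(t)}\Bigr).
\]
This is legitimate because $f\in C^2$ makes $1/f'$ a $C^1$ function. Because $f'$ is monotone and positive, $1/f'$ is monotone. The total variation of $1/f'$ on $[\alpha,\beta]$ is therefore $|1/f'(\beta)-1/f'(\alpha)|$, and the last integral is bounded in modulus by $\frac{1}{2\pi}|1/f'(\beta)-1/f'(\alpha)|$.

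Next I combine the boundary terms with the variation term. Denote $u=1/f'(\alpha)$ and $v=1/f'(\beta)$, both in $(0,1/\lambda]$. The total bound is
\[
 \frac{1}{2\pi}\bigl(u+v+|u-v|\bigr)=\frac{1}{\pi}\max(u,v)\le\frac{1}{\pi\lambda}.
\]
This already gives $1/(\pi\lambda)$, which is stronger than the stated $2/(\pi\lambda)$, so the lemma follows.

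The only point needing care is the sign of $f'$ and the monotonicity of $1/f'$. The integration by parts is elementary under the $C^2$ hypothesis, so I do not expect a genuine obstacle. Alternatively one could cite the standard statement, e.g.\ Titchmarsh's Lemma 4.2, but the direct argument above is self-contained.
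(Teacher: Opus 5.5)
Your proof is correct and follows the paper's route: one integration by parts using $(e(f))'=2\pi\ii f'e(f)$, with the remaining integral controlled by the variation of the monotone function $1/f'$. The only difference is bookkeeping: the paper bounds the boundary and variation terms separately by $1/(\pi\lambda)$ each. Your use of the constant sign of $f'$ and the identity $u+v+|u-v|=2\max(u,v)$ gives the sharper constant $1/(\pi\lambda)$. You should also add, as the paper does, that if the interval is not closed the bound follows by applying your argument on closed subintervals and passing to the limit, since the integrand is bounded and the constant is uniform.
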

\begin{proof}
Integrate by parts using $(e(f))'=2\pi\ii f'e(f)$. The boundary contribution is at most $1/(\pi\lambda)$. The integral contribution is at most $(2\pi)^{-1}\int |(1/f')'|$, which is at most $1/(\pi\lambda)$ because $1/f'$ is monotone and bounded in modulus by $1/\lambda$. Open endpoints follow by limits.
\end{proof}

\begin{lemma}[Integer polynomial phase]\label{lem:phase}
Fix $I=[\alpha,\beta]$ with $1<\alpha<\beta<\infty$. Let $N\ge2$, $M\ge1$ be integers. For any nonzero $h=(h_0,\ldots,h_{M-1})\in\ZZ^M$, set
$P_h(a)=\sum_{j=0}^{M-1}h_ja^{N+j}$. Then
\begin{equation}\label{eq:phase}
 \left|\int_I e(P_h(a))\,da\right|
 \le C M\exp\left(-\frac{(N-1)\log\alpha}{2M}\right)=:\eta_{N,M},
\end{equation}
where one may take an absolute constant $C=10$.
\end{lemma}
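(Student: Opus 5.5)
The plan is to apply the first-derivative estimate of Lemma~\ref{lem:vdC} on the part of $I$ where $|P_h'|$ is large, and to bound the remaining part of $I$ by its measure, using the integrality of $h$. Write
\[
P_h'(a)=a^{N-1}R(a),\qquad R(a)=\sum_{j=0}^{M-1}h_j(N+j)a^{j}.
\]
Here $R$ is a nonzero polynomial of degree $d\le M-1$ with integer coefficients. Its leading coefficient is $h_d(N+d)$, so it has modulus at least $N\ge1$. If $d=0$, then $|R|\ge1$ everywhere on $I$. Similarly,
\[
P_h''(a)=a^{N-2}S(a),\qquad S(a)=\sum_j h_j(N+j)(N+j-1)a^j,
\]
and $S$ is a nonzero polynomial of degree $d$. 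Hence $P_h''$ has at most $d$ zeros in $I$, and $I$ splits into at most $d+1\le M$ intervals on which $P_h'$ is monotone.

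First step (small set). Fix a threshold $\mu\in(0,1]$. I will use the Pólya sublevel-set inequality: for a real polynomial of degree $d\ge1$ with leading coefficient $c$,
\[
\bigl|\{a\in\RR:|R(a)|<\mu\}\bigr|\le 4\,(\mu/(2|c|))^{1/d}\le 4\mu^{1/d}.
\]
This uses $|c|\ge1$. The set $\{|R|<\mu\}$ is a union of at most $d+1$ intervals, because its boundary points are roots of $R=\pm\mu$. Since $|e(P_h)|=1$, its contribution to the integral is at most $4\mu^{1/d}\le 4\mu^{1/(M-1)}$ when $d\ge1$.

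Second step (large set). The complement $\{a\in I:|R(a)|\ge\mu\}$ is also a union of $O(M)$ intervals. Intersecting these with the at most $M$ monotonicity intervals gives at most about $3M$ pieces; I will count endpoints to make this precise. On each piece $P_h'$ is monotone, and
\[
|P_h'|\ge \alpha^{N-1}\mu,
\]
because $a\ge\alpha>1$. Lemma~\ref{lem:vdC} then bounds each piece by $2/(\pi\alpha^{N-1}\mu)$. The large set therefore contributes at most $(6/\pi)M\alpha^{-(N-1)}\mu^{-1}$, up to the exact piece count.

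Third step (optimization). I choose $\mu=\alpha^{-(N-1)/2}$. The small-set term becomes $4\alpha^{-(N-1)/(2d)}\le 4\alpha^{-(N-1)/(2M)}$, since $d<M$ and $\alpha>1$. The large-set term becomes $(6/\pi)M\alpha^{-(N-1)/2}\le 2M\alpha^{-(N-1)/(2M)}$. Adding the two gives a bound $\le(4+2M)\exp(-(N-1)\log\alpha/(2M))\le 10M\exp(-(N-1)\log\alpha/(2M))$, which is \eqref{eq:phase} with $C=10$. When $d=0$ the small set is empty and only the large-set term remains.

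Main obstacle. The delicate point is the sublevel estimate. It must be uniform in the size of $h$, which is exactly why the integer leading coefficient (modulus at least $N$) is used rather than any norm of $h$. Alongside this, the number of monotone pieces must be counted carefully so that the final constant stays within $10$. If the Pólya constant is awkward to cite, I would instead factor $R=c\prod(a-z_i)$ and bound the measure of $\{a:\prod|a-z_i|<\mu\}$ directly; at least one factor must then be smaller than $\mu^{1/d}$, which gives a measure of at most $2d\mu^{1/d}$. This costs an extra factor $M$ in the first term, which is still within the stated form $CM\exp(\cdot)$, although the constant $10$ would then need rechecking.
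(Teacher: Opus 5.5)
Your argument is correct and is essentially the paper's proof. It uses the same factorization $P_h'=a^{N-1}R$ with an integer leading coefficient, the same threshold $\alpha^{-(N-1)/2}$ for discarding a sublevel set of $R$, and Lemma~\ref{lem:vdC} on the at most $2d+1$ monotone pieces cut out by the zeros of $S=(N-1)R+aR'$. That piece count means your ``about $3M$'' estimate is safe.

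The only difference is that the paper bounds the sublevel set by the elementary root-deletion argument you list as your fallback, giving measure at most $2d\,\delta^{1/d}$ instead of P\'olya's $4\mu^{1/d}$. The constant $10$ holds with either bound.
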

\begin{proof}
Write $P_h'(a)=a^{N-1}R(a)$, where
$R(a)=\sum_j(N+j)h_ja^j$ is a nonzero integer polynomial of degree $m\le M-1$. Put $\delta=\alpha^{-(N-1)/2}<1$.
If $m\ge1$, factor $R(a)=b\prod_{\nu=1}^m(a-z_\nu)$ over $\CC$, counting multiplicity. Its leading coefficient satisfies $|b|\ge1$. Delete the intersections with $I$ of the $m$ intervals
\[
 (\Re z_\nu-\delta^{1/m},\Re z_\nu+\delta^{1/m}).
\]
Their union $U$ has length at most $2m\delta^{1/m}$. On $I\setminus U$, every factor has modulus at least $\delta^{1/m}$, so $|R(a)|\ge\delta$ and $|P_h'(a)|\ge\alpha^{(N-1)/2}$.
If $m=0$, take $U=\varnothing$; the same derivative bound holds since $|R|\ge1$.

Moreover,
\[
 P_h''(a)=a^{N-2}\big((N-1)R(a)+aR'(a)\big).
\]
The polynomial in parentheses has degree $m$ and is nonzero: its leading coefficient is $(N-1+m)b$. Thus there are at most $m$ positive real roots. Partition $I\setminus U$ at these roots. There are at most $2m+1$ interval pieces, on each of which $P_h'$ is monotone and the derivative bound applies. Lemma~\ref{lem:vdC} gives
\[
 \left|\int_I e(P_h)\right|
 \le 2m\delta^{1/m}+\frac{2(2m+1)}\pi\delta
\]
when $m\ge1$, and $2\delta/\pi$ when $m=0$. Since $m<M$ and $0<\delta<1$, these bounds are at most $10M\delta^{1/M}$. Endpoints have measure zero.
\end{proof}

\begin{lemma}[Finite Fourier comparison]\label{lem:comparison}
Let $T(\theta)=\sum_{h\in H}t_he(h\cdot\theta)$ be a finite trigonometric polynomial on $(\RR/\ZZ)^M$. Then
\begin{equation}\label{eq:comparison}
 \left|\int_I T(a^N,\ldots,a^{N+M-1})\,da
       -|I|\int_{[0,1]^M}T(\theta)\,d\theta\right|
       \le\eta_{N,M}\sum_{h\ne0}|t_h|.
\end{equation}
\end{lemma}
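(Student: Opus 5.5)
The plan is to expand $T$ term by term and use linearity of both integrals. For each frequency $h\in H\subset\ZZ^M$, I evaluate the composed character at the point $\theta=(a^N,\ldots,a^{N+M-1})$:
\[
 e\bigl(h\cdot(a^N,\ldots,a^{N+M-1})\bigr)=e\Bigl(\sum_{j=0}^{M-1}h_ja^{N+j}\Bigr)=e(P_h(a)),
\]
where $P_h$ is the integer polynomial of Lemma~\ref{lem:phase}. Because $e(\cdot)$ has period $1$, the composition with the torus coordinates is well defined. Since $H$ is finite, the interchange of sum and integral is trivial. This gives
\[
 \int_I T(a^N,\ldots,a^{N+M-1})\,da=\sum_{h\in H}t_h\int_I e(P_h(a))\,da .
\]

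Next I compute the Haar side. On $[0,1]^M$, orthogonality of characters gives $\int e(h\cdot\theta)\,d\theta=1$ if $h=0$ and $0$ otherwise. Hence $|I|\int T\,d\theta=|I|\,t_0$, with $t_0:=0$ if $0\notin H$. For $h=0$ the polynomial $P_0$ vanishes identically, so $\int_I e(P_0)\,da=|I|$. Therefore the $h=0$ terms cancel exactly in the difference.

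The remaining difference is $\sum_{h\ne0}t_h\int_I e(P_h(a))\,da$. For each nonzero $h$, Lemma~\ref{lem:phase} applies with the same $N\ge2$, $M\ge1$ and interval $I$, and bounds each integral by $\eta_{N,M}$ uniformly in $h$. The triangle inequality then gives \eqref{eq:comparison}.

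I expect no real obstacle here. The only points to check are that the bound of Lemma~\ref{lem:phase} is uniform in the size of the integer coefficients, which it is because the proof only used $|b|\ge1$ and the degree bound, and that the zero frequency is handled by exact cancellation rather than by the estimate.
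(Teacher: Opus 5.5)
Your proposal is correct and follows the paper's own proof. The Haar integral keeps only $t_0$, which cancels exactly against the $h=0$ term. Each nonzero frequency is then bounded uniformly by Lemma~\ref{lem:phase}, and the triangle inequality finishes the estimate. The paper also remarks that repeated Fourier indices must be merged first, which can only lower $\sum_{h\ne0}|t_h|$; since you take $H$ to be a set of distinct frequencies, this is automatic.
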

\begin{proof}
The product-Haar integral retains exactly the coefficient $t_0$. The constant term cancels in the difference. Apply Lemma~\ref{lem:phase} to each of the remaining finitely many terms and use the triangle inequality. Equal Fourier indices must first be combined; the sum of absolute coefficients can only decrease when they are combined.
\end{proof}
In particular, exact additive relations among the dyadic frequencies remain in the zero-frequency coefficient. They are never discarded as if different characters of the same row were independent.

\section{Polynomial weights and the comparison Gram matrix}\label{sec:comparison}
Fix an integer $d\ge0$ and $M>d$. Let $q_{r,M}$, $0\le r\le d$, be the real orthonormal polynomials for the probability measure $M^{-1}\sum_{j=0}^{M-1}\delta_{j/M}$, with positive leading coefficients.
\begin{lemma}[Uniform polynomial control]\label{lem:poly}
For each fixed $d$, there are $D_d<\infty$ and $M_d$ such that
\[
 \sup_{0\le t\le1}\sum_{r=0}^d|q_{r,M}(t)|^2\le D_d
                   \quad(M\ge M_d).
\]
In particular, endpoint evaluation at $t=0$ is uniformly bounded.
\end{lemma}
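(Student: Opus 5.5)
The approach is to compare the discrete orthonormal polynomials $q_{r,M}$ with the continuous orthonormal polynomials for Lebesgue measure on $[0,1]$, i.e.\ the normalized shifted Legendre polynomials $\ell_r$, $0\le r\le d$. The key object is the discrete Gram matrix of the monomial basis,
\[
 G_M=\Big(M^{-1}\sum_{j=0}^{M-1}(j/M)^{r+s}\Big)_{r,s=0}^d .
\]
As Riemann sums of $t^{r+s}$, its entries converge to the Hilbert matrix entries $H_{rs}=1/(r+s+1)$. The error is $O_d(1/M)$, by the endpoint estimate for Riemann sums of a monotone function on $[0,1]$. The Hilbert matrix $H$ is positive definite, being the Gram matrix of linearly independent monomials in $L^2[0,1]$. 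Hence $G_M\to H$ in operator norm, and there is $M_d$ such that $G_M\ge \lambda_{\min}(H)/2\cdot I$ for $M\ge M_d$. This forces $M>d$, and then the discrete inner product is nondegenerate on polynomials of degree $\le d$.

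Next I express the kernel in basis-free form. For any inner product on polynomials of degree $\le d$ with monomial Gram matrix $G$, the reproducing kernel on the diagonal is
\[
 \sum_{r=0}^d|q_r(t)|^2=v(t)^{T}G^{-1}v(t),\qquad v(t)=(1,t,\ldots,t^d)^T.
\]
This holds because the sum of $|q_r(t)|^2$ over an orthonormal basis does not depend on which orthonormal basis is chosen, and for the basis $G^{-1/2}v$ it equals the right-hand side. The normalization with positive leading coefficients plays no role here.

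For $t\in[0,1]$ we have $|v(t)|^2\le d+1$. Therefore
\[
 \sum_{r=0}^d|q_{r,M}(t)|^2\le (d+1)\,\|G_M^{-1}\|\le \frac{2(d+1)}{\lambda_{\min}(H)}=:D_d
\]
for $M\ge M_d$. The endpoint statement is the case $t=0$, where the sum is at most $D_d$ and each $|q_{r,M}(0)|\le D_d^{1/2}$.

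The only step needing care is the uniform lower bound on $\lambda_{\min}(G_M)$. It follows from the operator-norm convergence together with Weyl's inequality, $\lambda_{\min}(G_M)\ge\lambda_{\min}(H)-\|G_M-H\|$. Since $d$ is fixed, the $O_d(1/M)$ entrywise error gives $\|G_M-H\|\le (d+1)\max_{r,s}|(G_M)_{rs}-H_{rs}|$. Thus $M_d$ can be taken explicitly in terms of $d$ and $\lambda_{\min}(H)$, and no delicate estimate on orthogonal polynomials is required.
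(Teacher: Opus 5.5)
Your proof is correct and is essentially the paper's argument: Riemann-sum convergence of the monomial Gram matrix to the Hilbert matrix, a uniform lower bound $\lambda_{\min}(H)/2$ for large $M$, and the basis-free kernel identity $\sum_r q_{r,M}(t)^2=v(t)^T G_M^{-1}v(t)\le 2(d+1)/\lambda_{\min}(H)$. The Weyl-inequality step and the $O_d(1/M)$ rate only make explicit what the paper leaves implicit, and the shifted-Legendre comparison you announce at the outset is never actually used.
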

\begin{proof}
Put $v(t)=(1,t,\ldots,t^d)^T$ and $H_M=M^{-1}\sum_jv(j/M)v(j/M)^T$. Its entries converge by Riemann sums to $H_{rs}=1/(r+s+1)$. For every nonzero real vector $b$, $b^THb=\int_0^1|b\cdot v(t)|^2\,dt>0$, because a nonzero polynomial cannot vanish on an interval. Thus the smallest eigenvalue of $H$ is positive. For large $M$, entrywise convergence implies operator-norm convergence, and $H_M\ge\lambda_{\min}(H)I/2$. The evaluation kernel is independent of the orthonormal basis and equals $v(t)^TH_M^{-1}v(t)$. It is at most $2(d+1)/\lambda_{\min}(H)$ on $[0,1]$. This proves the assertion and also justifies the existence of the discrete orthonormal polynomials for $M>d$.
\end{proof}

\begin{definition}[Weighted carrier sampling matrix]\label{def:matrix}
Let $\Omega$ be any finite set of distinct integers containing $0$, with $H=|\Omega|$, and write $L=H(d+1)$. For $M>d$ and the polynomials just defined, set
\[
 V(\theta)_{j,(\ell,r)}=M^{-1/2}q_{r,M}(j/M)e(\ell\theta_j),
 \quad G(\theta)=V(\theta)^*V(\theta),\quad \Delta=G-I_L.
\]
The norm throughout is the complex Euclidean norm or its induced operator norm.
\end{definition}
\begin{lemma}[Ideal-model moment bound]\label{lem:ideal}
For independent uniform $\theta_j\in[0,1]$, for each positive integer $q$ and all sufficiently large $M$ with $\rho:=HD_d/M\le1$,
\begin{equation}\label{eq:ideal}
 \EE\tr(\Delta^{2q})\le C_q L^2\rho^q.
\end{equation}
The constant is independent of the integer values in $\Omega$.
\end{lemma}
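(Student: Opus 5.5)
The plan is to write $\Delta$ as a sum of independent, centered, bounded Hermitian matrices. Then a matrix Bernstein tail bound (Tropp~\cite[Theorem 6.1(ii)]{Tropp}) can be integrated to obtain the moment bound. The crude inequality $\tr(\Delta^{2q})\le L\|\Delta\|^{2q}$ together with the dimensional factor $L$ in the tail bound produces the $L^2$ in \eqref{eq:ideal}.

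\emph{Decomposition.} For $0\le j<M$ let $u_j\in\CC^L$ have entries $(u_j)_{(\ell,r)}=q_{r,M}(j/M)e(\ell\theta_j)$, and set $X_j=M^{-1}\overline{u_j}u_j^T$. With this convention $G(\theta)=\sum_jX_j$, since $G_{(\ell,r),(\ell',s)}=M^{-1}\sum_j q_{r,M}(j/M)q_{s,M}(j/M)\,e((\ell'-\ell)\theta_j)$. The $X_j$ are independent because the $\theta_j$ are.

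\emph{Mean.} Compute $\EE X_j$ entrywise. Since the integers in $\Omega$ are distinct, $\EE e((\ell'-\ell)\theta_j)=\delta_{\ell\ell'}$, so $\EE X_j=M^{-1}\bigl(\delta_{\ell\ell'}q_{r,M}(j/M)q_{s,M}(j/M)\bigr)$. Summing over $j$ and using discrete orthonormality gives $\sum_j\EE X_j=I_L$. Hence $\Delta=\sum_jY_j$ with $Y_j=X_j-\EE X_j$ independent and mean zero. Only the distinctness of $\Omega$ enters here; no arithmetic of its elements is used, which is why the constant will not depend on the integer values in $\Omega$.

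\emph{Uniform bound.} By Lemma~\ref{lem:poly}, for $M\ge M_d$,
\[
\|X_j\|=M^{-1}|u_j|^2=M^{-1}H\sum_{r}|q_{r,M}(j/M)|^2\le HD_d/M=\rho .
\]
Both $X_j$ and $\EE X_j$ are positive semidefinite with norm at most $\rho$, so $\|Y_j\|\le\rho$.

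\emph{Variance.} We have $\EE Y_j^2\le \EE X_j^2=\EE\bigl(M^{-1}|u_j|^2X_j\bigr)\le\rho\,\EE X_j$. Summing over $j$ gives $0\le\sum_j\EE Y_j^2\le\rho I_L$, so the variance parameter is at most $\rho$.

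\emph{Tail bound.} Matrix Bernstein in dimension $L$ then gives, for $t\ge 0$,
\[
\PP(\|\Delta\|\ge t)\le 2L\exp\Bigl(-\frac{t^2/2}{\rho+\rho t/3}\Bigr).
\]

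\emph{Moments.} Write $\EE\|\Delta\|^{2q}=\int_0^\infty 2q\,t^{2q-1}\PP(\|\Delta\|\ge t)\,dt$. Split at $t=1$. For $t\le1$ the exponent is at least $3t^2/(8\rho)$, giving a contribution $O_q(L\rho^{q})$. For $t>1$ the exponent is at least $3t/(8\rho)$, giving $O_q(L\rho^{2q}e^{-c/\rho})$, which is at most $O_q(L\rho^q)$ because $\rho\le1$. Thus $\EE\|\Delta\|^{2q}\le C_qL\rho^q$, and
\[
\EE\tr(\Delta^{2q})\le L\,\EE\|\Delta\|^{2q}\le C_qL^2\rho^q .
\]
Here $C_q$ depends only on $q$.

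\emph{Main obstacle.} The only delicate step is matching the convention of the cited Bernstein inequality, which requires $\lambda_{\max}$ or norm bounds for both $\pm Y_j$; this is covered by $\|Y_j\|\le\rho$.

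If one prefers to avoid the tail integration, a fallback is the direct combinatorial route: expand $\tr\Delta^{2q}$ into words $\tr(Y_{j_1}\cdots Y_{j_{2q}})$. By independence and centering, only words in which each index appears at least twice survive. One then bounds each surviving word using $\|Y_j\|\le\rho$ and $\sum_j\tr|Y_j|\le 2L$. This is messier, so the Bernstein route is the one I would write out.
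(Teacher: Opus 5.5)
Your proposal is correct and follows essentially the same route as the paper: the rank-one decomposition of $G$ with mean $I_L$ from orthogonality of distinct characters, the bounds $\|Y_j\|\le\rho$ and $\sum_j\EE Y_j^2\le\rho I_L$, Tropp's matrix Bernstein applied to $\pm Y_j$, a tail integral split at $t=1$ using $\rho\le1$, and finally $\tr(\Delta^{2q})\le L\|\Delta\|^{2q}$. The differences are cosmetic: the names of the centered and uncentered summands are swapped relative to the paper, and you integrate the $t>1$ tail only from $1$, whereas the paper extends both tail integrals to $[0,\infty)$ and bounds $\rho^q+\rho^{2q}\le 2\rho^q$.
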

\begin{proof}
Let $v_j$ be the conjugate transpose of row $j$ of $V$, and put $Y_j=v_jv_j^*$, $A_j=\EE Y_j$, $X_j=Y_j-A_j$. Distinct integer characters are orthogonal on $[0,1]$. Consequently
\[
 (A_j)_{(\ell,r),(\ell',s)}
   =\boldsymbol1_{\ell=\ell'}\frac{q_{r,M}(j/M)q_{s,M}(j/M)}M,
 \qquad\sum_jA_j=I_L.
\]
Also $\|v_j\|^2=H\sum_rq_{r,M}(j/M)^2/M\le\rho$, so $0\le Y_j\le\rho I_L$ and $0\le A_j\le\rho I_L$. It follows that $\|X_j\|\le\rho$, $\EE X_j=0$, and the $X_j$ are independent. Expanding the square gives
\[
 \EE X_j^2=\EE Y_j^2-A_j^2\le\EE Y_j^2\le\rho A_j,
 \qquad\left\|\sum_j\EE X_j^2\right\|\le\rho.
\]
I use the bounded self-adjoint matrix Bernstein inequality, in the form of Tropp~\cite[Theorem 6.1(ii)]{Tropp}. Applying it to $X_j$ and $-X_j$ gives
\[
 \PP\{\|\Delta\|\ge t\}\le2L\exp\left(-\frac{t^2}{2\rho(1+t/3)}\right).
\]
Thus the right side is at most $2L\exp(-3t^2/(8\rho))$ for $0<t\le1$, and at most $2L\exp(-3t/(8\rho))$ for $t\ge1$. Using the tail-integral formula for a nonnegative random variable and extending both integrals to $[0,\infty)$ yields
\begin{align*}
 \EE\|\Delta\|^{2q}
 &\le4qL\left(\int_0^\infty t^{2q-1}e^{-3t^2/(8\rho)}dt
                    +\int_0^\infty t^{2q-1}e^{-3t/(8\rho)}dt\right)\\
 &\le C_qL(\rho^q+\rho^{2q})\le 2C_qL\rho^q.
\end{align*}
Finally $0\le\tr(\Delta^{2q})\le L\|\Delta\|^{2q}$ by the finite-dimensional spectral theorem. Absorb the factor $2$ into $C_q$.
\end{proof}

\begin{lemma}[Coefficient count]\label{lem:coeff}
If $M\ge M_d$, the Fourier coefficient $\ell^1$ norm of $\tr(\Delta(\theta)^{2q})$ is at most
\[
 L^{2q}(D_d+1)^{2q}.
\]
\end{lemma}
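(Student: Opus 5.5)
We bound the Fourier $\ell^1$ norm of $\tr(\Delta^{2q})$ using two facts: the $\ell^1$ norm of Fourier coefficients is submultiplicative under products of trigonometric polynomials, and it is subadditive under sums. Write $\|T\|_A=\sum_h|t_h|$ for a trigonometric polynomial $T$ on $(\RR/\ZZ)^M$. First we compute the entries of $G(\theta)=V^*V$. Explicitly,
\[
 G_{(\ell,r),(\ell',s)}(\theta)=\frac1M\sum_{j=0}^{M-1}q_{r,M}(j/M)q_{s,M}(j/M)\,e((\ell'-\ell)\theta_j).
\]
Here each $j$ contributes a character $e((\ell'-\ell)\theta_j)$ of frequency $(\ell'-\ell)e_j$. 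When $\ell=\ell'$ these are all the zero character; when $\ell\ne\ell'$ they are distinct nonzero characters for distinct $j$. In either case the triangle inequality gives
\[
 \|G_{(\ell,r),(\ell',s)}\|_A\le\frac1M\sum_j|q_{r,M}(j/M)||q_{s,M}(j/M)|\le 1
\]
by Cauchy--Schwarz and the orthonormality $M^{-1}\sum_jq_{r,M}(j/M)^2=1$. Subtracting the identity changes only diagonal entries by a constant. Hence every entry of $\Delta$ satisfies $\|\Delta_{uv}\|_A\le 2$. We do not need the finer fact that diagonal entries of $\Delta$ vanish identically. The constant $2\le D_d+1$ may be used, since Lemma~\ref{lem:poly} together with $q_{0,M}\equiv1$ shows $D_d\ge1$. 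Alternatively, one could bound $|q_{r,M}(j/M)q_{s,M}(j/M)|\le D_d$ pointwise via Lemma~\ref{lem:poly}, giving $\|G_{uv}\|_A\le D_d$ and $\|\Delta_{uv}\|_A\le D_d+1$ directly. That route is presumably the one matching the stated constant, and we adopt it to stay aligned with the hypothesis $M\ge M_d$.

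Next we expand the trace:
\[
 \tr(\Delta^{2q})=\sum_{u_1,\ldots,u_{2q}}\Delta_{u_1u_2}\Delta_{u_2u_3}\cdots\Delta_{u_{2q}u_1},
\]
which is a sum of $L^{2q}$ products of $2q$ entries. Submultiplicativity, $\|ST\|_A\le\|S\|_A\|T\|_A$, holds by convolution of coefficient sequences. Each product therefore has $\|\cdot\|_A\le(D_d+1)^{2q}$, and subadditivity over the $L^{2q}$ index tuples gives the bound $L^{2q}(D_d+1)^{2q}$. Combining equal frequencies can only lower the norm, consistent with the remark after Lemma~\ref{lem:comparison}.

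The only delicate point is the entrywise bound on $\Delta$. We must check that $\|\cdot\|_A$ is computed after grouping equal frequencies. This matters because distinct $j$ with $\ell=\ell'$ all produce the zero character, and several indices could otherwise coincide. Using the triangle inequality before grouping gives an upper bound in every case, so the entrywise estimate survives. Everything else is bookkeeping.
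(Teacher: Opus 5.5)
Your proposal is correct and is essentially the paper's proof. You bound each entry of $\Delta$ by $D_d+1$ in Fourier $\ell^1$ norm using $|q_{r,M}q_{s,M}|\le D_d$ from Lemma~\ref{lem:poly}, expand the trace into $L^{2q}$ cyclic products, and conclude by submultiplicativity and the triangle inequality, with grouping of equal frequencies only lowering the norm. Your side observation is also valid and slightly sharper: Cauchy--Schwarz with discrete orthonormality gives the entrywise bound $2\le D_d+1$ and needs only $M>d$.
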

\begin{proof}
Each entry of $G$ is a sum of $M$ characters whose absolute coefficients sum to at most $D_d$, since $|q_r(t)q_s(t)|\le D_d$. Each entry of $\Delta$ therefore has coefficient norm at most $D_d+1$. The coefficient norm is submultiplicative for scalar trigonometric polynomials (convolution followed by the triangle inequality). The trace expansion has exactly $L^{2q}$ cyclic index tuples, each giving a product of $2q$ entries. Add the resulting bounds. This bound permits all coincidences and cancellations among Fourier frequencies.
\end{proof}

\section{The separation estimate}\label{sec:separation}
Set
\[
 M_N=\lfloor\sqrt N\rfloor,\quad K_N=\lceil(2N)^{1/4}\rceil,
 \quad\Omega_N=\{0\}\cup\{\pm2^k:1\le k\le K_N\}.
\]
For $N$ with $M_N>d$, let $V_{N,d}(a)$ be the preceding matrix evaluated at $\theta_j=a^{N+j}$. Order its zero-frequency columns first, so $V_{N,d}=[Q_N\ O_{N,d}]$.
\begin{theorem}[Eventual full Gram bound]\label{thm:gram}
There is a measurable null set $E\subset(1,\infty)$ such that for every $a\notin E$ and every fixed integer $d\ge0$, for all sufficiently large integers $N$,
\begin{equation}\label{eq:gram}
 \frac12 I_L\le V_{N,d}(a)^*V_{N,d}(a)\le\frac32 I_L.
\end{equation}
The threshold may depend on $a,d$; $L=(2K_N+1)(d+1)$ grows with $N$.
\end{theorem}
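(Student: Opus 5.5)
The argument is a Borel--Cantelli scheme over integers $N$: for fixed $d$, bound the Lebesgue measure of the bad set $\{a\in I:\|\Delta_N(a)\|>1/2\}$ on each compact interval $I=[\alpha,\beta]\subset(1,\infty)$. Here $\Delta_N(a)=V_{N,d}(a)^*V_{N,d}(a)-I_L$, and \eqref{eq:gram} is equivalent to $\|\Delta_N(a)\|\le1/2$. By Chebyshev with an even trace moment,
\[
 \bigl|\{a\in I:\|\Delta_N(a)\|>1/2\}\bigr|\le 2^{2q}\int_I\tr\bigl(\Delta_N(a)^{2q}\bigr)\,da ,
\]
using $\|\Delta\|^{2q}\le\tr(\Delta^{2q})$ for Hermitian $\Delta$. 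The trace $\tr(\Delta(\theta)^{2q})$ is a finite trigonometric polynomial in $\theta=(\theta_0,\dots,\theta_{M-1})$ with integer frequencies, since every entry of $G(\theta)$ is a combination of characters $e((\ell'-\ell)\theta_j)$. Evaluated at $\theta_j=a^{N+j}$ it is exactly $\tr(\Delta_N(a)^{2q})$. Lemma~\ref{lem:comparison} then splits the integral into a product-Haar term and an error.

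The Haar term is $|I|\,\EE\tr(\Delta^{2q})$, bounded by Lemma~\ref{lem:ideal} as $C_q|I|L^2\rho^q$. Here $H=2K_N+1\asymp N^{1/4}$, $M=M_N\asymp N^{1/2}$ and $L=H(d+1)$, so $\rho=HD_d/M\asymp N^{-1/4}$, which is $\le1$ for large $N$. Lemma~\ref{lem:ideal} explicitly allows the integer values in $\Omega_N$ (the huge frequencies $\pm2^k$) to be arbitrary. The Haar term is therefore $O_{q,d,I}(N^{1/2}N^{-q/4})$, summable in $N$ once $q\ge7$. The error term is at most $\eta_{N,M}\,L^{2q}(D_d+1)^{2q}$ by Lemma~\ref{lem:coeff}. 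With $M=\lfloor\sqrt N\rfloor$ we get $\eta_{N,M}\le 10\sqrt N\exp(-c_\alpha\sqrt N)$, where $c_\alpha\approx(\log\alpha)/2$. This beats the polynomial factor $L^{2q}=O(N^{q/2})$, so the error is also summable.

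Fix $q=q_0$ (say $q=8$). Then $\sum_N|\{a\in I:\|\Delta_N(a)\|>1/2\}|<\infty$, and Borel--Cantelli shows that for almost every $a\in I$ only finitely many $N$ violate \eqref{eq:gram}. Define $E_{d,I}$ as the limsup set, $E_d=\bigcup_m E_{d,[1+1/m,m]}$, and $E=\bigcup_{d\ge0}E_d$. This is a countable union of null sets, so $E$ is null and independent of $d$ and of any function. Measurability holds because each bad set is defined by continuous functions of $a$.

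The main obstacle is keeping the comparison error honest. The trace polynomial has frequencies of size up to about $2^{K_N}$, and many coincident frequency vectors arise from the dyadic resonances $2^k\pm2^{k'}$. Lemma~\ref{lem:phase} nevertheless applies uniformly to any nonzero integer vector $h$, and Lemma~\ref{lem:coeff} bounds the $\ell^1$ norm after all coincidences, so the resonant terms are absorbed into the zero coefficient. That zero coefficient is exactly the independent-model expectation, which Lemma~\ref{lem:ideal} controls. I would check carefully that $\eta_{N,M}$ uses the window start $N$, so the exponent $(N-1)\log\alpha/(2M)\asymp\sqrt N$, and that $M_N\ge\max(M_d,d+1)$ holds eventually.
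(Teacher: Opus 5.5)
Your proposal is correct and follows the paper's proof essentially step for step. The paper likewise applies Lemmas~\ref{lem:comparison}, \ref{lem:ideal} and \ref{lem:coeff} with $q=8$, obtaining the bound $C_{I,d}N^{-3/2}+C_{I,d}N^{9/2}e^{-c_I\sqrt N}$ for the sixteenth trace moment; it then uses Markov's inequality, the first Borel--Cantelli lemma, and a countable union over $d\in\NN_0$ and the intervals $[1+1/b,b]$ to produce the single null set $E$.
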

\begin{proof}
Fix $I=[\alpha,\beta]\Subset(1,\infty)$ and $d$. For large $N$, Lemma~\ref{lem:poly} applies, $L=O_d(N^{1/4})$, $M_N\asymp N^{1/2}$, and $\rho=O_d(N^{-1/4})\le1$. Apply Lemmas~\ref{lem:comparison}, \ref{lem:ideal} and \ref{lem:coeff} with $q=8$. The nonnegative integrand satisfies
\begin{align}
 \int_I\tr\big((G_{N,d}(a)-I)^{16}\big)\,da
 &\le C_{I,d}L^2\rho^8
       +\eta_{N,M_N}L^{16}(D_d+1)^{16}\notag\\
 &\le C_{I,d}N^{-3/2}+C_{I,d}N^{9/2}e^{-c_I\sqrt N}.
 \label{eq:summable}
\end{align}
For the second bound use $N-1\ge N/2$ for $N\ge2$ in \eqref{eq:phase}. The last expression is summable in $N$.

If $\|G_{N,d}(a)-I\|>1/2$, a real eigenvalue of this Hermitian matrix has modulus greater than $1/2$, so its sixteenth-power trace exceeds $2^{-16}$. Markov's inequality bounds the measure of this event by $2^{16}$ times \eqref{eq:summable}. Each event is measurable, since the matrix entries, its norm and its trace are continuous functions of $a$. The first Borel--Cantelli lemma gives the asserted eventual bound for almost every $a\in I$, without any independence assumption on the events.

Take the countable union of these exceptional limsup sets over $d\in\NN_0$ and intervals $I_b=[1+1/b,b]$, integers $b\ge2$. They cover $(1,\infty)$. The union is measurable and null. For a point outside it the conclusion holds for every fixed $d$. No functions or envelopes have been selected in constructing $E$.
\end{proof}

\begin{proposition}[Slow-sector separation]\label{prop:sep}
Let $\Pi$ be orthogonal projection onto $\ran O$ and $S=Q^*(I-\Pi)Q$, where $Q^*Q=I$. For the convention $\langle z,w\rangle=z^*w$,
\[
 u^*Su=\inf_v\|Qu+Ov\|_2^2.
\]
For the matrices in Theorem~\ref{thm:gram}, eventually $S\ge I/2$. Thus the slow polynomial sector remains uniformly separated from the oscillatory sector for every fixed degree.
\end{proposition}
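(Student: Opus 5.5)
The plan has two parts: first the variational identity, which is pure linear algebra, and then the eigenvalue bound, which comes from Theorem~\ref{thm:gram} by a Schur-complement argument.

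\emph{The identity.} Since $\Pi$ is the orthogonal projection onto $\ran O$ and $I-\Pi$ is self-adjoint and idempotent,
\[
u^*Su=\|(I-\Pi)Qu\|_2^2 .
\]
For fixed $u$, the vector $y=Qu$ satisfies
\[
\inf_v\|y+Ov\|_2^2=\operatorname{dist}(y,\ran O)^2=\|(I-\Pi)y\|_2^2 .
\]
Here we use that $\ran O$ is a finite-dimensional, hence closed, subspace, and that the infimum is attained at any $v$ with $Ov=-\Pi y$. This gives the stated formula. No condition $Q^*Q=I$ is needed for this step.

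\emph{The bound $S\ge I/2$.} Fix $a\notin E$ and $d$. For $N$ large, Theorem~\ref{thm:gram} gives
\[
\|V_{N,d}w\|_2^2\ge\tfrac12\|w\|_2^2
\]
for every $w=(u,v)$, where $u$ indexes the zero-frequency columns and $v$ the columns of $O_{N,d}$. With $V_{N,d}=[Q_N\ O_{N,d}]$ this means $\|Q_Nu+O_{N,d}v\|_2^2\ge\tfrac12(\|u\|^2+\|v\|^2)\ge\tfrac12\|u\|^2$. Taking the infimum over $v$ and using the identity yields $u^*S_Nu\ge\tfrac12\|u\|^2$.

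The hypothesis $Q^*Q=I$ should be checked rather than assumed. The zero-frequency block is $Q_{j,r}=M^{-1/2}q_{r,M}(j/M)$, because $e(0\cdot\theta_j)=1$. Hence $Q^*Q=I_{d+1}$ exactly, by discrete orthonormality of the $q_{r,M}$, for $M_N>d$ and $M_N\ge M_d$. This is independent of $a$, so $S$ is a compression of $I-\Pi$ and also satisfies $S\le I$.

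The "thus" clause is a restatement. The cosine of the angle between $\ran Q$ and $\ran O$ is $\sup\|\Pi Qu\|/\|Qu\|$ over nonzero $u$, and $\|\Pi Qu\|^2=\|u\|^2-u^*Su\le\tfrac12\|u\|^2$. So the cosine is at most $1/\sqrt2$, uniformly in $N$ beyond a threshold that depends only on $a$ and $d$.

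There is no real obstacle in this argument. The only points to watch are bookkeeping: the column ordering must match the block decomposition $[Q\ O]$, and the threshold for $N$ is the one from Theorem~\ref{thm:gram}, enlarged if needed so that Lemma~\ref{lem:poly} applies.
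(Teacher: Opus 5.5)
Your proposal is correct and follows the paper's proof. The identity is the distance-to-$\ran O$ characterization, which the paper writes as the orthogonal splitting $Qu+Ov=(I-\Pi)Qu+(\Pi Qu+Ov)$, and the bound $S\ge I/2$ comes from dropping $\|v\|^2$ in the Gram lower bound and taking the infimum over $v$. Your explicit check that $Q^*Q=I$ follows from the discrete orthonormality of the $q_{r,M}$ is a useful addition that the paper leaves implicit.
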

\begin{proof}
Decompose $Qu+Ov=(I-\Pi)Qu+(\Pi Qu+Ov)$. The two terms are orthogonal. The second can be made zero, since $\Pi Qu\in\ran O$, even when $O$ has a kernel. The infimum is therefore $\|(I-\Pi)Qu\|^2=u^*Q^*(I-\Pi)Qu$. Under \eqref{eq:gram}, for every $u,v$,
$\|Qu+Ov\|^2\ge(\|u\|^2+\|v\|^2)/2\ge\|u\|^2/2$. Take the infimum.
\end{proof}

\section{Mellin continuation and residue extraction}\label{sec:mellin}
\begin{proposition}[Actual Mellin continuation]\label{prop:mellin}
For every $a>1$ and every representation in Definition~\ref{def:class}, $\mathcal M_F$ converges absolutely in $\Re s<-p$ and continues meromorphically to $\CC$, with at most simple poles at the nonnegative integers. For every $r\ge0$,
\[
 -\Res_{s=r}\mathcal M_F(s)=c_r.
\]
In particular the coefficients are independent of the admissible representation.
\end{proposition}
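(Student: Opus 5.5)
The plan is to split $F=A+\Phi$, where $\Phi=\sum_{\sigma,k}B_{\sigma,k}\psi_{\sigma,k}$, and treat the two parts separately. For $\Re s<-p$, condition (B) with $j=0$ gives $|F(x)|\le |A(x)|+C_0x^p$, and (A) shows $A$ is bounded. Hence $\int_1^\infty x^{\Re s-1}|F|\,dx<\infty$, and absolute convergence holds there. For the slow part I will use the standard subtraction: for every $J$ write
\[
A(x)=\sum_{r<J}c_rx^{-r}+A_J(x),\qquad A_J=O(x^{-J}).
\]
Since $\int_1^\infty x^{s-1-r}\,dx=-1/(s-r)$ for $\Re s<r$, the transform $\mathcal M_A$ continues to $\Re s<J$ as
\[
-\sum_{r<J}\frac{c_r}{s-r}+\mathcal M_{A_J}(s),
\]
where the last term is holomorphic there. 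This gives simple poles with $-\Res_{s=r}\mathcal M_A=c_r$, and letting $J\to\infty$ gives the continuation to all of $\CC$.

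The main task is to show that $\mathcal M_\Phi$ continues to an \emph{entire} function, so that it contributes no residues. My first approach is contour rotation using the holomorphy in (D). Because the neighborhood may narrow with no uniform strip width, I will instead use repeated integration by parts along the real axis, exploiting the fast phase.

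Write $\psi_{\sigma,k}=e(\sigma2^ka^x)$. Its derivative is $2\pi\ii\sigma 2^k a^x\log a\,\psi_{\sigma,k}$, so
\[
\psi_{\sigma,k}=\frac{\psi_{\sigma,k}'}{2\pi\ii\sigma2^k a^x\log a}.
\]
Integrating $x^{s-1}B_{\sigma,k}\psi_{\sigma,k}$ by parts moves the derivative onto $x^{s-1}B_{\sigma,k}a^{-x}$. Each step gains a factor $2^{-k}a^{-x}$. The derivatives of $B_{\sigma,k}$ are controlled in sum over $(\sigma,k)$ by (B): each costs at most $x^{-3/4}$ relative to $x^{p}$, which is harmless against $a^{-x}$. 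After one step the integrand is already $O(x^{\Re s-1+p}a^{-x})$ summed over $k$. The summed bound (B) lets me interchange sum and integral by dominated convergence. The boundary terms at $x=1$ are entire in $s$, being finite sums of $B^{(i)}_{\sigma,k}(1)$ times polynomials in $s$; their sums over $k$ converge because of the factor $2^{-k}$ and (B).

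This shows $\mathcal M_\Phi(s)$ equals an expression holomorphic in all of $\CC$: the integral converges for every $s$ because $a^{-x}$ beats any power $x^{\Re s+p}$. The identity holds first for $\Re s<-p$ and then everywhere by uniqueness of continuation. Condition (D) is not even needed for this step; the real $C^\infty$ bounds in (B) suffice. I will check carefully that the interchange of $\sum_{\sigma,k}$ with both the integral and the boundary terms is justified by (B) with $j=0,1$.

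Combining the two parts, $\mathcal M_F=\mathcal M_A+\mathcal M_\Phi$ is meromorphic on $\CC$ with at most simple poles at $r\in\NN_0$, and $-\Res_{s=r}\mathcal M_F=c_r$. Since $\mathcal M_F$ depends only on $F$ and not on the chosen representation, this residue identity shows the $c_r$ are independent of the admissible representation. The expected main obstacle is the uniform handling of the infinite carrier sum: guaranteeing that the integration-by-parts remainder and the boundary series converge locally uniformly in $s$, so that the continuation is genuinely holomorphic.
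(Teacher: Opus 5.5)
Your proposal is correct and follows the paper's proof essentially step for step. The paper also uses a single integration by parts per carrier, with (B) for $j=0,1$ and $|2\pi\sigma 2^k|\ge 4\pi$ giving one integrable majorant of order $a^{-x}x^{\Re s+p-1}$, locally uniform in $s$, so that the carrier sum is entire; it then subtracts $\sum_{r<J}c_rx^{-r}$ for the slow part. One small gloss: (A) bounds $A$ only as $x\to\infty$, so boundedness on the initial compact interval needs continuity of $A$, which comes from (D), as the paper remarks.
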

\begin{proof}
Condition (A) with $J=1$ implies $A=O(1)$; continuity controls the initial compact interval. Condition (B) with $j=0$ gives $\sum|B_{\sigma,k}(x)|\le C_0x^p$. This proves absolute convergence in $\Re s<-p$, as well as interchange of sum and integral there by absolute integrability.

For one index put $\lambda=2\pi\sigma2^k$, $b=\log a>0$ and $w_s(x)=x^{s-1}B_{\sigma,k}(x)$. Since $(\exp(\ii\lambda a^x))'=\ii\lambda b a^x\exp(\ii\lambda a^x)$, integration by parts gives the continuation formula
\begin{align}\label{eq:IBP}
 I_{\sigma,k}(s)
 ={}&-\frac{B_{\sigma,k}(1)\exp(\ii\lambda a)}{\ii\lambda ba}\\
 &-\int_1^\infty\frac{a^{-x}}{\ii\lambda b}
 \big((s-1)x^{s-2}B_{\sigma,k}(x)
       +x^{s-1}B'_{\sigma,k}(x)-b x^{s-1}B_{\sigma,k}(x)\big)
           \exp(\ii\lambda a^x)\,dx.\notag
\end{align}
The upper boundary vanishes, since exponential decay dominates polynomial growth. Initially this agrees with the absolutely convergent oscillatory integral; the formula defines it for every $s$.

Let $K\subset\CC$ be compact, $T=\sup_{s\in K}\Re s$, and $H=\sup_{s\in K}|s-1|$. The sum over all indices of the absolute integrands, uniformly in $s\in K$, is bounded by a constant depending on $a,K,C_0,C_1$ times
\[
 a^{-x}\big(x^{T+p-2}+x^{T+p-7/4}+x^{T+p-1}\big),\qquad x\ge1.
\]
I used $|\lambda|^{-1}\le(4\pi)^{-1}$ and the pointwise summed derivative bounds, before integration. The displayed majorant is integrable; after any fixed number of $s$ derivatives it remains integrable with additional powers of $\log x$. The boundary terms form an absolutely convergent series. Equivalently, Tonelli applied to this common majorant bounds the sum of the integrals of the individual suprema in $s$. Hence \eqref{eq:IBP} defines an entire sum $H_F(s)=\sum I_{\sigma,k}(s)$, by local uniform convergence and differentiation under the integral. No bound on a sum of $x$-suprema is being inferred from a pointwise bound.

For the slow part and each $J\ge1$ put $R_J(x)=A(x)-\sum_{r<J}c_rx^{-r}$. It is $O(x^{-J})$ at infinity and continuous on compact intervals. Thus
\begin{equation}\label{eq:mellin}
 \mathcal M_F(s)=\sum_{r=0}^{J-1}\frac{c_r}{r-s}
       +\int_1^\infty x^{s-1}R_J(x)\,dx+H_F(s),\qquad \Re s<J.
\end{equation}
The remainder integral is holomorphic there by uniform domination on compact $s$-sets, including factors $(\log x)^h$ for its derivatives. These formulas agree on overlaps since they agree in their initial half-plane; alternatively subtract consecutive formulas and integrate $c_Jx^{s-J-1}$. Their domains exhaust $\CC$. Taking $J>r$ proves the residue formula with the stated minus sign. Two representations describe the same initial integral and therefore the same continuation and residues.
\end{proof}

\section{From short-window separation to all residues}\label{sec:identification}
\begin{lemma}[Closure and summed Taylor remainder]\label{lem:taylor}
The class $\mathcal C_a$ is closed under subtraction. For a member with exponent $p$, on $N\le x\le N+M_N$ the sum of the absolute degree-$d$ Taylor remainders of any finite subset of its envelopes is
\[
 O_d\big(N^{p-(d+1)/4}\big).
\]
The slow Taylor remainder is $O_d(N^{-(d+3)/2})$.
\end{lemma}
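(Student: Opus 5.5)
For closure under subtraction, I will concatenate representations. If $F_1,F_2\in\mathcal C_a$ have representations with data $(A_i,B^{(i)}_{\sigma,k},c^{(i)}_r,p_i)$, I take $A=A_1-A_2$, $B_{\sigma,k}=B^{(1)}_{\sigma,k}-B^{(2)}_{\sigma,k}$ and $c_r=c^{(1)}_r-c^{(2)}_r$.

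\begin{enumerate}
\item[(A)] This is linear, so it passes to the difference at once.
\item[(B)] The triangle inequality gives $\sum|B^{(j)}_{\sigma,k}|\le(C^{(1)}_j+C^{(2)}_j)x^{\max(p_1,p_2)-3j/4}$. The single exponent is $p=\max(p_1,p_2)$, and it still does not depend on $j$.
\item[(C)] The tail sums likewise add.
\item[(D)] I intersect the two neighborhoods. This gives an open neighborhood of $[1,\infty)$, and normal convergence on compact subsets is preserved under differences.
\end{enumerate}

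No cancellation issue arises, because the bounds only need to hold as upper bounds.

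For the envelope remainder I fix a finite subset $S$ of indices and $x\in[N,N+M_N]$. I use Taylor's theorem with the Lagrange (or integral) remainder about $N$, separately for real and imaginary parts, or in integral form for complex values. This gives
\[
|B_{\sigma,k}(x)-T_dB_{\sigma,k}(x)|\le\frac{(x-N)^{d+1}}{(d+1)!}\sup_{N\le\xi\le x}|B^{(d+1)}_{\sigma,k}(\xi)|.
\]
Summing over $S$ directly would require a sum of suprema, which (B) does not control. So I use the integral form instead:
\[
\frac{1}{d!}\int_N^x(x-t)^d B^{(d+1)}_{\sigma,k}(t)\,dt.
\]
I sum over $S$ inside the integral and then apply (B) pointwise in $t$:
\[
\sum_{S}|\,\cdot\,|\le\frac{M_N^{d}}{d!}\int_N^{N+M_N}C_{d+1}t^{p-3(d+1)/4}\,dt\le C M_N^{d+1}N^{p-3(d+1)/4}.
\]
Here I used $t\asymp N$ on the window (in fact $t\le 2N$). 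With $M_N\le N^{1/2}$ this is $O_d(N^{p+(d+1)/2-3(d+1)/4})=O_d(N^{p-(d+1)/4})$, uniformly in $S$.

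For the slow part, the natural choice is a Taylor expansion of $A$ about $N$. By (A) with $J=1$, the derivatives satisfy $A^{(j)}(x)=O(x^{-1-j})$ for $j\ge1$. The same integral remainder then gives $O(M_N^{d+1}N^{-1-(d+1)})=O(N^{(d+1)/2-d-2})=O(N^{-(d+3)/2})$, as claimed.

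The only delicate point I expect is exactly this summation issue: I must keep the sum over indices inside the $t$-integral so that only the pointwise summed bound (B) is used. The rest is bookkeeping of exponents. I will also note that the implicit constants depend on $d$ and on the representation, namely $C_{d+1}$ and the (A)-constant for $j=d+1$, but not on $S$ or $N$.
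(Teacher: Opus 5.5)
Your proposal is correct and is essentially the paper's proof. For closure, you intersect the neighborhoods, subtract the representations and take $p=\max(p_1,p_2)$. For the remainders, you use the integral form of Taylor's theorem and keep the index sum inside the $t$-integral, so that only the pointwise summed bound (B) is invoked, together with $A^{(d+1)}(t)=O(t^{-d-2})$ from (A) with $J=1$; this gives $M_N^{d+1}N^{p-3(d+1)/4}\le CN^{p-(d+1)/4}$ and $M_N^{d+1}N^{-d-2}\le N^{-(d+3)/2}$ exactly as the paper does.
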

\begin{proof}
For subtraction, intersect the two open neighborhoods, subtract slow parts and envelopes, and take the larger of their two nonnegative exponents. Triangle inequalities verify (A)--(C); normal convergence on compact subsets of the intersection verifies (D).

For a complex $C^{d+1}$ function, repeated real-variable integration gives
\[
 f(x)-\sum_{r=0}^d\frac{f^{(r)}(N)}{r!}(x-N)^r
   =\frac1{d!}\int_N^x(x-t)^df^{(d+1)}(t)\,dt.
\]
This holds separately for real and imaginary parts and hence for $f$. Sum the absolute values over a finite set of envelopes and apply the triangle inequality inside this common integral. By (B), the result is at most
\[
 \frac{C_{d+1}}{d!}\int_N^x(x-t)^dt^{p-3(d+1)/4}\,dt
   \le C_{p,d}M_N^{d+1}N^{p-3(d+1)/4}
   \le C_{p,d}N^{p-(d+1)/4},
\]
since $N\le t\le2N$. This does not interchange a sum with unrelated suprema. From (A), $A^{(d+1)}(t)=O(t^{-d-2})$, using $J=1$ and $d+1\ge1$. The same integral gives the slow bound $M_N^{d+1}N^{-d-2}\le N^{-(d+3)/2}$.
\end{proof}

\begin{proof}[Proof of Theorem~\ref{thm:main}]
Fix $a$ outside the single set in Theorem~\ref{thm:gram}, and subtract the functions using Lemma~\ref{lem:taylor}. Write the difference as $F=A+\sum B\psi$ and choose its exponent $p$. For each $N$, retain $k\le K_N$. Since $x\le N+M_N\le2N$, I have $\lceil x^{1/4}\rceil\le K_N$. Thus the omitted tail is $O(N^{-T})$ uniformly in the window, for every fixed $T>0$, by (C).

Write the degree-$d$ Taylor polynomial of $A(N+M_Nt)$ in the basis $q_{r,M_N}(t)$, and denote its coefficient vector by $u$. Write each retained envelope polynomial in the same basis, giving vectors $v_\ell$. Define the normalized sample vector $y_j=F(N+j)/\sqrt{M_N}$. Then exactly
\[
 y=Q_Nu+O_{N,d}v+e,
\]
where the normalization gives
\[
 \|y\|_2=O(N^{-T}),\qquad
 \|e\|_2=O_d(N^{p-(d+1)/4})+O_d(N^{-(d+3)/2})+O(N^{-T}).
\]
For $y$, the all-order sample assumption is uniform over integers in the window because $N\le N+j\le2N$. For $e$, the root-mean-square norm is no larger than the uniform pointwise remainder bound. Carrier factors have modulus one at real samples.

Proposition~\ref{prop:sep} and the triangle inequality imply
\[
 \|u\|_2\le\sqrt2(\|y\|_2+\|e\|_2).
\]
The Taylor polynomial has value $A(N)$ at $t=0$. By Lemma~\ref{lem:poly} and Cauchy--Schwarz,
$|A(N)|\le\sqrt{D_d}\|u\|_2$. For any desired $R>0$, choose a fixed integer $d$ such that $(d+1)/4>p+R$, and then $T>R$. Since $p\ge0$, this choice also gives $(d+3)/2>R$. I conclude that $A(N)=O(N^{-R})$ for every $R>0$.

Finally (A) with $J=1$ gives $c_0=\lim A(N)=0$. If $c_0,\ldots,c_{r-1}$ vanish, (A) with $J=r+1$ gives $N^rA(N)=c_r+O(N^{-1})$, whereas all-order decay makes the left side tend to zero. Induction proves $c_r=0$ for all $r$. Proposition~\ref{prop:mellin} identifies these coefficients with the actual Mellin residues. The exceptional set is independent of $p,d,F_1,F_2$, since it was already intersected over all integer degrees.
\end{proof}

\section{Examples and exact obstructions}\label{sec:examples}
\begin{example}[Basic members]\label{ex:basic}
The constant $1$ belongs to every $\mathcal C_a$ with all envelopes zero and $c_0=1$. A single carrier with a constant envelope also belongs: in (B) take $p=0$; all positive-order envelope derivatives vanish. For (C), a fixed carrier is absent from the tail once $x$ is large, and a larger constant controls the remaining compact interval. All functions in these examples are entire.
\end{example}

\begin{example}[Integer aliasing]\label{ex:integer}
For integers $m\ge2$, $r\ge0$,
\[
 h_r(x)=x^{-r}(1-\cos(4\pi m^x))
\]
lies in $\mathcal C_m$. Take $A=x^{-r}$ and $B_{+,1}=B_{-,1}=-x^{-r}/2$, all other envelopes zero. The asymptotic expansion has $c_r=1$ and other coefficients zero. For $J\le r$, derivatives of $x^{-r}$ have order $O(x^{-J-j})$; for $J>r$ the remainder is zero. Envelope derivatives are bounded by $C_jx^{-r-j}\le C_jx^{-3j/4}$, so $p=0$ works. The tail in (C) is empty for $x\ge1$, and (D) holds on the right half-plane. At every positive integer $n$, $m^n$ is an integer, hence $h_r(n)=0$. Proposition~\ref{prop:mellin} gives $-\Res_{s=r}\mathcal M_{h_r}=1$. For these parameters all sampled carrier columns coincide with the corresponding slow columns, so $S=0$. This disproves the all-parameter assertion but not an almost-everywhere assertion.
\end{example}

\begin{example}[A genuinely infinite family]\label{ex:infinite}
For every $a>1$, the function
\[
 F(x)=x^{-1/8}\sum_{k\ge1}\exp(-4^k\exp(-x^{1/4}))\cos(2\pi2^ka^x)
\]
is in $\mathcal C_a$, with $A=0$ and $p=1/8$.
\end{example}
\begin{proof}
Set $y=x^{1/4}$, $z_k=4^ke^{-y}$, and $b_k(x)=x^{-1/8}e^{-z_k}$. Each of the two sign envelopes is $b_k/2$.
For $y\ge1$, splitting at $z_k=1$ gives
\[
 \sum_ke^{-z_k}\le C(1+y),\qquad
 \sum_kz_k^q e^{-z_k}\le C_q\quad(q\ge1).
\]
Indeed, below $1$ the $z_k$ form a geometric progression with ratio $4$, so their positive powers have bounded sum; the number of such terms is at most $1+y/\log4$. Above $1$, the terms are bounded by a summable sequence $C_q4^{qj}e^{-4^{j-1}}$, independent of where the split occurs.

The $m$th derivative of $e^{-4^ke^{-y}}$ with respect to $y$ is $P_m(z_k)e^{-z_k}$, where $P_0=1$ and
$P_{m+1}(z)=z(P_m(z)-P_m'(z))$. For $m\ge1$, $P_m$ is a polynomial with zero constant term. The preceding geometric bounds control the summed absolute value of these $y$ derivatives by $C_m$ when $m\ge1$, and by $C(1+y)$ when $m=0$.
Every derivative $y^{(l)}(x)$ satisfies $|y^{(l)}(x)|\le C_lx^{-3l/4}$ for $l\ge1$. The finite chain-rule expansion of order $j$ consists of products of such derivatives with total derivative order $j$. Combining it with Leibniz's rule and $|(x^{-1/8})^{(h)}|\le C_hx^{-1/8-h}$ gives
\[
 \sum_k|b_k^{(j)}(x)|\le C_jx^{-1/8-3j/4}(1+x^{1/4})
        \le C'_jx^{1/8-3j/4}.
\]
This establishes (B) with one exponent for all orders.

If $k>\lceil y\rceil$, the smallest $z_k$ in the tail is at least $\exp((\log4-1)y)$. A geometric-tail bound gives $\sum_{k>\lceil y\rceil}e^{-z_k}\le C\exp(-\exp((\log4-1)y)/2)$. This is smaller than every negative power of $x=y^4$, proving (C).

For (D), use the principal fourth root and eighth root on the right half-plane and the open set
\[
 U=\{z:\Re z>0,\ |\Im(z^{1/4})|<\pi/3\}.
\]
It contains $[1,\infty)$. For a compact $K\subset U$, $\Re(e^{-z^{1/4}})$ has a strictly positive minimum $c_K$. Also $|a^z|$ and $|z^{-1/8}|$ are bounded there. Each signed carrier term is consequently bounded by $C_K\exp(-c_K4^k+C_K2^k)$. This is summable in $k$, so the series converges normally on compact subsets of the common neighborhood. Condition (A) is immediate for $A=0$.
\end{proof}

\section{Mathematical consequences and scope}\label{sec:math_scope}
Theorem~\ref{thm:main} separates two kinds of information. An admissible representation supplies a continuous observable and hence a Mellin transform. Integer sampling then determines the residue sequence within the stated class, although the theorem does not reconstruct the full continuous function or its entire Mellin contribution. Exact equality of all sufficiently late integer samples is included in the hypothesis.

The distinction between a statement about almost every parameter and a statement about every parameter is essential. Example~\ref{ex:integer} gives complete aliasing at integer bases, with nonzero residues hidden by exactly vanishing samples. The proof obtains its exceptional set by parameter integration and countable intersections; it does not supply a test for an individually specified base.

The scales in the assumptions have a concrete role. A window of length $N^{1/2}$ allows a degree-$d$ envelope approximation with error $O(N^{p-(d+1)/4})$, because each derivative gains $N^{-3/4}$. The effective number of modes is $O(N^{1/4})$, leaving enough samples for a summable sixteenth-moment estimate. The requirement of a single exponent $p$ for all derivative orders is what permits the degree to increase with the desired accuracy while remaining fixed as $N$ tends to infinity.

\section{Physical implications for holography and quantum gravity}\label{sec:phys}

Theorem~\ref{thm:main} (integer-sample identifiability) has a precise
interpretation in the context of wormhole contributions and Mellin
averaging, within the stated model class.

First, it guarantees interpolation invariance for a fixed
$a\in(1,\infty)\setminus E$. Gravitational path integrals can yield
formal large-$N$ expansions, while the dual conformal field theory is
defined only at integer $N$ when $N$ labels its rank. The theorem shows
that any two admissible continuations $F_1,F_2\in\mathcal{C}_a$ satisfying
$F_1(n)-F_2(n)=O(n^{-R})$ for every $R>0$ as $n\to\infty$ must yield
identical Mellin residues. Consequently, the formal $1/N$ expansion
extracted by Mellin averaging is intrinsic to the integer sequence
within this class, not an artifact of the chosen admissible
interpolation.

Second, the Mellin-continuation result identifies Mellin averaging as
a deterministic filter at the level of formal asymptotic coefficients.
The rapidly oscillating carriers, viewed here as a model for the erratic
microscopic fluctuations of a chaotic theory, contribute, together with
their admissible envelopes, an entire Mellin continuation and therefore
no residues. Only the slow component $A$ can produce poles and thus
determines the formal $1/N$ expansion. This mathematically confirms,
within the specified model class, that Mellin averaging removes the
prescribed oscillatory contribution while preserving the slow
asymptotic data. Identifying these data with semiclassical gravitational
coefficients requires additional physical input.

Finally, the theorem provides a rigorous mathematical foundation for
interpolation invariance in this model of Mellin averaging. By combining
conditions (A)--(D) with the sampling Gram bound proved for almost every
$a>1$, it provides a concrete set of sufficient conditions for potential
holographic applications. If a given observable, such as a spectral form
factor or a partition function, can be shown to belong to
$\mathcal{C}_a$ for a fixed $a\in(1,\infty)\setminus E$, then its formal
Mellin residue expansion is rigorously well-defined and independent of
the chosen admissible interpolation of its integer samples within that
class. Establishing these conditions for an actual holographic
observable, and relating the resulting expansion to a physical ensemble
average, remain separate questions. In this sense, the result is not
merely a technical statement about integer samples; it is a step toward
a mathematical theory of asymptotic averaging in quantum gravity.

\appendix
\section{A derivation from fixed-support phase estimates}\label{sec:alternative}
The primary proof uses Lemma~\ref{lem:phase}, which applies to arbitrary integer Fourier vectors on a short window. Only a fixed number of row coordinates is needed in each term of the trace moment. This observation also yields a derivation from the existing estimates of Aistleitner, Baker, Technau and Yesha~\cite{ABTY}. I give the details to specify exactly the relation to that prior result.

\begin{lemma}[Support of the trace polynomial]\label{lem:support}
For the matrices of Section~\ref{sec:comparison}, every Fourier index with a nonzero coefficient in
\[
 T(\theta)=\tr\big((G(\theta)-I_L)^{16}\big)
\]
has support in at most sixteen coordinates. The constant coefficient is allowed.
\end{lemma}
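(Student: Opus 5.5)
The plan is to expand the trace directly and count row coordinates. By Definition~\ref{def:matrix}, each entry of $G$ is
\[
 G_{(\ell,r),(\ell',s)}(\theta)=\sum_{j=0}^{M-1}M^{-1}q_{r,M}(j/M)q_{s,M}(j/M)\,e\big((\ell'-\ell)\theta_j\big).
\]
Every term is a scalar multiple of a single character. That character's Fourier index is $(\ell'-\ell)\mathbf e_j$, which is supported on the one coordinate $j$, or is zero when $\ell=\ell'$. Subtracting $I_L$ only changes constant coefficients on the diagonal. Hence every entry of $\Delta=G-I_L$ is a finite linear combination of characters $e(h\cdot\theta)$ with $|\operatorname{supp}h|\le1$.

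Next I expand $\tr(\Delta^{16})$ as the sum over cyclic index tuples $\kappa_1,\ldots,\kappa_{16}$ of the products $\Delta_{\kappa_1\kappa_2}\cdots\Delta_{\kappa_{16}\kappa_1}$. This is the same expansion used in Lemma~\ref{lem:coeff}. I then distribute each product over the single-character terms of its sixteen factors. Each resulting monomial is a product of sixteen characters $e(h^{(i)}\cdot\theta)$, with each $h^{(i)}$ supported in at most one coordinate. Their product is $e\big((h^{(1)}+\cdots+h^{(16)})\cdot\theta\big)$. The support of the sum lies in the union of the individual supports, which has at most sixteen elements.

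Finally, I collect equal Fourier indices. After combining, every index with a nonzero coefficient in $T$ equals some sum appearing above, so its support has at most sixteen coordinates. Cancellations can only remove indices, never create new ones. An index may also collapse to zero, for instance through the exact dyadic resonances $\ell_1-\ell_2+\cdots=0$ on a shared coordinate; this is the allowed constant coefficient.

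There is no real obstacle here. The only care needed is to state that the support bound holds for each monomial before indices are combined, and is therefore inherited by the combined coefficients.
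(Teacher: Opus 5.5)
Your proposal is correct and follows essentially the same route as the paper's proof. Both arguments write each Gram entry as a sum of single-row characters $e((\ell'-\ell)\theta_j)$, note that subtracting $I_L$ adds only constants, and expand the trace cyclically so that each monomial involves at most sixteen row coordinates. Both then observe that combining equal Fourier indices can only remove indices from the support, never create new ones.
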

\begin{proof}
Each term of a Gram entry is a constant times
$e((\ell'-\ell)\theta_j)$ for one row $j$. Subtracting the identity introduces only constants. In the cyclic expansion of the trace, a product of sixteen entries thus uses at most sixteen row coordinates. Repeated coordinates are combined by adding their integer coefficients. Deleting coordinates whose combined coefficient vanishes can only reduce the support. Summing monomials cannot create a Fourier index absent from every summand.
\end{proof}

For $J=[A,A+1]$ with $A>0$, the specialization of \cite[Corollary 4.3]{ABTY} to exponent sequence $b_n=n$ and its fixed parameter $k=8$ gives, for each $\eta>0$,
\begin{equation}\label{eq:abty}
 \left|\int_J e\!\left(\sum_{i=1}^s u_i\exp(tn_i)\right)dt\right|
 \le C_{A,\eta}\,n_s^{-\eta},
 \quad 1\le s\le16,
\end{equation}
where $n_1<\cdots<n_s$ are positive integers and all $u_i$ are nonzero integers. The constant is uniform in the $n_i$ and $u_i$. The sequence $b_n=n$ satisfies the growth and gap hypotheses of that result. In particular, the coefficient restrictions introduced elsewhere in the correlation argument are not hypotheses of the cited corollary; its uniformity follows from the repulsion estimate in \cite[Lemma 4.2]{ABTY}.

\begin{proposition}[Logarithmic-parameter moment estimate]\label{prop:alternative}
Fix $A>0$ and an integer $d\ge0$. With the matrices and scales of Theorem~\ref{thm:gram},
\begin{equation}\label{eq:alternative}
 \int_A^{A+1}\tr\big((G_{N,d}(\exp t)-I_L)^{16}\big)dt
 \le C_{A,d}\big(N^{-3/2}+N^{-2}\big)
\end{equation}
for all sufficiently large $N$. Consequently, Theorem~\ref{thm:gram} also follows from \eqref{eq:abty} and Lemmas~\ref{lem:poly}, \ref{lem:ideal}, \ref{lem:coeff} and \ref{lem:support}.
\end{proposition}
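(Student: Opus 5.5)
The plan is to repeat the proof of Theorem~\ref{thm:gram} in the logarithmic variable $t=\log a$, substituting \eqref{eq:abty} for Lemma~\ref{lem:phase}. Fix $A,d$ and put $T(\theta)=\tr\big((G(\theta)-I_L)^{16}\big)$ for the matrices with $M=M_N$ and $\Omega=\Omega_N$. By Lemma~\ref{lem:coeff}, $T$ is a finite trigonometric polynomial $\sum_h t_h e(h\cdot\theta)$ whose coefficient $\ell^1$ norm is at most $L^{16}(D_d+1)^{16}$. The evaluation point is $\theta_j=a^{N+j}=\exp(t(N+j))$. For each nonzero index $h$, Lemma~\ref{lem:support} says that $h$ is supported on at most sixteen coordinates $j_1<\dots<j_s$, $s\le16$, with nonzero integer entries $u_i$. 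Hence $e(h\cdot\theta)$ is exactly the integrand of \eqref{eq:abty} with $n_i=N+j_i$, and since $n_s\ge N$ that integral is bounded by $C_{A,\eta}N^{-\eta}$. This bound is uniform in the integer coefficients, so the large dyadic frequencies $\pm2^k$, of size about $2^{K_N}$, do not matter. Combining equal indices first, exactly as in Lemma~\ref{lem:comparison}, gives the comparison
\[
 \Big|\int_A^{A+1}T(\theta(t))\,dt-\int_{[0,1]^{M}}T\,d\theta\Big|\le C_{A,\eta}N^{-\eta}L^{16}(D_d+1)^{16}.
\]

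Next I bound the product-Haar term by Lemma~\ref{lem:ideal} with $q=8$. As in Theorem~\ref{thm:gram}, we have $L=O_d(N^{1/4})$ and $\rho=HD_d/M_N=O_d(N^{-1/4})\le1$ for large $N$, which gives $C_dL^2\rho^8=O_d(N^{1/2-2})=O_d(N^{-3/2})$. The error term is $O_{A,d}(N^{-\eta}N^{4})$. Choosing $\eta=6$ makes it $O(N^{-2})$, and this yields \eqref{eq:alternative}.

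For the consequence, both terms in \eqref{eq:alternative} are summable in $N$. The integrand is nonnegative, and $\|G-I\|>1/2$ forces the trace to exceed $2^{-16}$. Markov's inequality and Borel--Cantelli then give the eventual bound \eqref{eq:gram} for almost every $t\in[A,A+1]$. The map $t\mapsto e^t$ is a diffeomorphism, so it carries null sets of $t$ to null sets of $a$. Taking the countable union over integers $A\ge1$ and over rational $A$ near $0$ (intervals $[1/b,1/b+1]$ suffice to cover $(0,\infty)$) and over $d\in\NN_0$ produces a single null exceptional set, exactly as before.

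The main obstacle is justifying the application of \eqref{eq:abty}. I must check that the $n_i$ are distinct positive integers, which holds because distinct coordinates give distinct $N+j$. I must also check that the $u_i$ are nonzero integers, which holds after deleting vanishing combined entries, as Lemma~\ref{lem:support} ensures. Finally, the constant must be uniform in $s\le16$, which I get by taking the maximum over these finitely many $s$. The loss $L^{16}\asymp N^4$ is only polynomial, so any fixed $\eta>4+3/2$ works.
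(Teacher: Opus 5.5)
Your proposal is correct and follows the paper's proof essentially step for step: the support lemma plus \eqref{eq:abty} with $n_i=N+j_i$ and $\eta=6$ against the $O_d(N^4)$ coefficient bound, Lemma~\ref{lem:ideal} with $q=8$ for the Haar term, and then Markov, Borel--Cantelli in $t$, and transport of null sets under $t\mapsto e^t$. The only cosmetic difference is your covering family ($[A,A+1]$ for integers $A\ge1$ together with $[1/b,1/b+1]$) in place of the paper's union over all positive rational $A$.
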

\begin{proof}
Let $T_N(\theta)=\tr((G(\theta)-I_L)^{16})$ for $M=M_N$ and $\Omega=\Omega_N$. A nonconstant Fourier character in $T_N$ has, by Lemma~\ref{lem:support}, at most sixteen nonzero integer coefficients on distinct row indices $j_1<\cdots<j_s$. On substituting $\theta_j=\exp(t(N+j))$, apply \eqref{eq:abty} with $n_i=N+j_i$. Since $n_s\ge N$, its integral is bounded by $C_{A,\eta}N^{-\eta}$. Exact resonances yield the constant Fourier coefficient, which agrees with the product-Haar integral and is retained without approximation. The interval $J$ has length one.

Lemma~\ref{lem:coeff} bounds the total absolute coefficient sum by
$L^{16}(D_d+1)^{16}=O_d(N^4)$. Hence
\[
 \left|\int_J T_N(\exp(tN),\ldots,\exp(t(N+M_N-1)))dt
       -\EE_{\rm Haar}T_N\right|
       \le C_{A,d,\eta}N^{4-\eta}.
\]
Take $\eta=6$. Lemma~\ref{lem:ideal}, with $q=8$, gives
$\EE_{\rm Haar}T_N=O_d(N^{-3/2})$, proving \eqref{eq:alternative}.

Markov's inequality and the first Borel--Cantelli lemma now imply the eventual Gram bound for almost every $t\in J$, just as in Theorem~\ref{thm:gram}. Take the countable union of bad limsup sets over all nonnegative integer degrees and all positive rational $A$. These intervals cover $(0,\infty)$, so the resulting exceptional set $Z$ is Borel and null there. The exponential map is a homeomorphism and is Lipschitz on each compact interval $J$. It follows that $\exp(Z)$ is Borel and null in $(1,\infty)$, and the required assertion holds outside it.

This argument performs the measure estimate in $dt$ and then transports null sets. No unproved weighted variant of \eqref{eq:abty} in $da$ is required. If desired, nonnegativity of the trace gives the corresponding upper integral bound in $da$ from $da=e^t dt\le e^{A+1}dt$ on $J$.
\end{proof}

\bibliographystyle{abbrvnat}
\bibliography{references}
\end{document}